\documentclass[letter,11pt]{article}
\usepackage[letterpaper,text={6.6in,9.3in}]{geometry}

\usepackage{setspace}
\setstretch{0.96}

\usepackage{authblk}
\usepackage{hyperref}
\usepackage{amsfonts}
\usepackage{amsmath}
\usepackage{amssymb}
\usepackage{amsthm}
\usepackage{graphicx}
\usepackage{color}
\usepackage[none]{hyphenat}
\usepackage{wrapfig}
\usepackage[justification=centering]{caption}
\usepackage{bm}
\usepackage{enumitem}

\theoremstyle{plain}
\newtheorem{theorem}{Theorem}[section]
\newtheorem{lemma}[theorem]{Lemma}
\newtheorem{claim}{Claim}
\newtheorem{corollary}[theorem]{Corollary}
\theoremstyle{definition}
\newtheorem{definition}[theorem]{Definition}

\usepackage{algorithm}
\usepackage[noend]{algpseudocode}

\theoremstyle{plain}
\theoremstyle{definition}

\graphicspath{ {Figures/} }

\DeclareMathOperator*{\argmax}{arg\,max}

\newcommand{\dx}{\mathrm{d}x}

\begin{document}
	
\pagenumbering{gobble}
\begin{titlepage}
{ 
	\title{Reaching Distributed Equilibrium with Limited ID Space
		{\let\thefootnote\relax\footnote{
			This research was supported by the Israel Science Foundation (grant 1386/11).
		}}
	}
	\author{Dor Bank}
	\author{Moshe Sulamy}
	\author{Eyal Waserman}
	\affil{Tel-Aviv University}
	\date{}
	\maketitle
}

\begin{abstract}
	
We examine the relation between the size of the $id$ space
and the number of rational agents in a network
under which equilibrium in distributed algorithms is possible.
When the number of agents in the network is \emph{not} a-priori known,
a single agent may duplicate to gain an advantage, pretending to be more than one agent.
However, when the $id$ space is limited, each duplication involves a risk of being caught.
By comparing the risk against the advantage,
given an $id$ space of size $L$,
we provide a method of calculating the minimal threshold $t$,
the required number of agents in the network,
such that the algorithm is in equilibrium.
That is, it is the minimal value of $t$ such that if agents a-priori know that $n \geq t$
then the algorithm is in equilibrium.
We demonstrate this method by applying it to two problems, Leader Election and Knowledge Sharing,
as well as providing a constant-time approximation $t \approx \frac{L}{5}$
of the minimal threshold for Leader Election.


\end{abstract}

	
\end{titlepage}
\pagenumbering{arabic}

\section{Introduction}
\label{section_intro}

Traditionally, distributed computing focuses on computing in the face of faulty processors,
from fail-stop to Byzantine faults, and more.
Recently, a new model of distributed game theory has emerged,
in which the participants, now called rational agents,
are not faulty but may cheat in order to increase their profit.
The goal is to design distributed algorithms that are \emph{in equilibrium},
that is, where no agent has an incentive to cheat.

This paper addresses the impact of the $id$ space on the equilibrium
of distributed algorithms where the network consists of rational agents.
When the $id$ space is limited and the number of agents in the network is not a-priori known,
what is the minimal guarantee on the number of agents (the \emph{threshold})
we must provide agents for the algorithm to reach equilibrium?

To address this question, we consider a network of \emph{rational} agents \cite{Abraham:2011}
with a limited $id$ space of size $L$.
Agents a-priori know $L$ but do not know $n$, the number of agents in the network.
Additionally, we provide agents with a threshold $t$,
the minimal number of agents in the network such that $t \leq n$,
i.e., agents a-priori know that the number of agents $n$ is distributed uniformly $n\sim U[t,L]$.
Our goal is to find the \emph{minimal} value of $t$ for an algorithm
such that it is in equilibrium.

Stemming from game theory, the agents in the network are \emph{rational}.
Each agent $a$ has a utility function $u_a$ assigning values to all possible outcomes of the algorithm;
the higher the value, the better the outcome is for $a$.
A rational agent participates in the algorithm but may deviate from it 
if a deviation increases the probability of its preferred outcome,
i.e., its \emph{expected} utility increases.
An algorithm is said to be \emph{in equilibrium} if no agent has an incentive to deviate
at any point in the algorithm execution.
To differentiate rational agents from Byzantine faults,
the \emph{Solution Preference} \cite{gtblocks} property ensures
agents never prefer an outcome in which the algorithm fails
over an outcome in which the algorithm terminates correctly.

Here we study the case where $n$, the number of agents in the network,
is not a-priori known to agents.
In such a case, an agent can duplicate itself in order to gain an advantage,
deviating from the algorithm by pretending to be more than one agent.
This deviation is also known as a Sybil Attack \cite{sybilattack}.
For example, in a \emph{fair} Leader Election protocol,
where every agent has an equal chance of being elected leader,
a deviating agent can duplicate itself to gain an \emph{unfair} advantage
and increase its chances of being elected leader (itself or one of its simulated duplicates).

In \cite{gtcolor} the authors study the case where agents do not a-priori know $n$,
or know only \emph{bounds} $\alpha,\beta$ on the value of $n$ such that $\alpha \leq n \leq \beta$.
In both cases it is assumed the $id$ space is much larger than $n$,
disregarding the probability that a fake $id$ (of a duplicated agent) collides with an existing $id$,
thus an agent can duplicate itself with low or no risk involved.

In this paper we examine the case where the $id$ space \emph{is} limited,
such that \emph{any} duplication involves a greater risk.
Agents a-priori know only $L$ and the threshold $t$, that is,
the minimal number of agents in the network such that $t \leq n \leq L$.
Our goal is to find the \emph{minimal} threshold $t$ as a function of $L$
for an algorithm to be in equilibrium.

Our contributions are as follows:
\begin{itemize}
	\item A general method for calculating the minimal threshold $t$
	for \emph{any} algorithm in $O(L^3)$ time.
	
	\item Two enhancements to the general method.
		The "Linear Threshold" enhancement allows us to use binary search for the value of $t$,
		instead of scanning all possible values,
		thus improving the running time to $O(L^2 \log L)$.
		The "Limited Duplications" enhancement allows us to look at a single duplication number
		instead of checking all possible values of $m$,
		thus improving the running time to $O(L^2)$.
	
	\item Leader Election algorithm \cite{gtleader,gtblocks} is proven to satisfy both enhancements,
		thus finding the threshold in $O(L \log L)$ time.
		Particularly, we show that whenever an agent has an incentive to deviate by duplicating
		$m$ agents, it also has an incentive to deviate by duplicating $1$ agent.
		Thus, to check for equilibrium it suffices to check the case $m=1$.
	
	\item Constant-time approximation of the Leader Election threshold,
		such that for a threshold $t \approx \frac{L}{5}$ the algorithm is in equilibrium.
	
	\item Knowledge Sharing in a ring \cite{gtcolor} is proven to satisfy the "Linear Threshold"
	enhancement (Section~\ref{sub_enhance}), thus finding the threshold in $O(L^2 \log L)$ time.
\end{itemize}

\subsection{Related Work}

Recently, there has been a line of research studying the connection between
distributed computing and game theory,
stemming from the problem of secret sharing \cite{secretsharing}
and multiparty computation \cite{Abraham:2011,scalablerational,byzagreement},
as well as the BAR model (Byzantine, acquiescent and rational) \cite{barfault,fairsfe,selfishevil}.
In \cite{AbrahamDGH06,Abraham:2011}, the authors discuss the combination of these two fields.
In \cite{gtleader}, the first protocols where processors in the network behave as rational agents
are presented, specifically protocols for Leader Election.
This line of research is continued in \cite{gtblocks}, where the authors provide
basic building blocks for game theoretic distributed networks,
and algorithms for Consensus, Renaming, Leader Election, and Knowledge Sharing.
Consensus was researched further in \cite{gtconsensus},
where the authors show that there is no ex-post Nash equilibrium, and present
a Nash equilibrium that tolerates $f$ failures under some minimal assumptions on the failure pattern.

In \cite{gtcolor}, the authors examine the impact of a-priori knowledge of the network size
on the equilibrium of distributed algorithms.
When agents have no a-priori knowledge of $n$, or only know bounds on $n$,
a cheating agent may duplicate itself to appear as if it were several agents in the network,
and thus may gain an advantage in certain algorithms.
In \cite{gtcolor}, the authors assume the $id$ space is unlimited,
i.e., some duplications involve minimal or no risk.
To the best of our knowledge,
distributed algorithms with rational agents where the $id$ space is limited
have not been studied previously.

\section{Model}
\label{section_model}
The model in discussion is the standard synchronous message-passing model
with $n \geq 3$ nodes, each node representing an agent.
The network is assumed to be $2$-vertex connected, as was shown necessary in \cite{gtblocks}.
Throughout the paper, $n$ always denotes the actual number of nodes in the network.

Initially, each agent knows its $id$ and input (if any), but not the $id$ and input of any other agent.
We assume the prior of each agent over any information it does not know is uniformly distributed
over all possible values.
The size of the $id$ space is $L$, such that each agent is assigned a unique $id$ taken from $[1,L]$.
The threshold $t$ is a minimal guarantee on the size of the network such that $n \geq t \geq 3$.
Both $L$ and $t$ are a-priori known to all agents.

We assume all agents start the protocol together.
If not, we can use the Wake-Up building block \cite{gtblocks} to relax this assumption.

\subsection{Equilibrium}

An algorithm is in \emph{equilibrium} if no agent has an incentive to deviate.
When an agent considers a deviation, it assumes all other agents follow the algorithm,
i.e., it is the only agent deviating.

Following the model in \cite{gtleader,gtblocks,gtcolor} we assume an agent aborts the algorithm
whenever it detects a deviation made by another agent\footnotemark,
even if the detecting agent can gain by not aborting.

\footnotetext{
	Such as receiving the wrong message, learning of two agents with the same $id$, etc.
}

Each algorithm produces a single output per agent, once, at the end of the execution.
Each agent is a \emph{rational} agent, and has a preference over its own output.
Formally, let $\Theta$ be the set of all possible output vectors,
denote an output vector $O=\{o_1,\dots,o_n\} \in \Theta$,
where $o_a$ is the output of agent $a$ and $O[a]=o_a$.
Let $\Theta_L$ be the set of \emph{legal} output vectors, in which the protocol terminates successfully,
and let $\Theta_E$ be the set of \emph{erroneous} output vectors
such that $\Theta = \Theta_L \cup \Theta_E$ and $\Theta_L \cap \Theta_E = \varnothing$.

Each rational agent $a$ has a utility function $u_a: \Theta \rightarrow N$.
The higher the value assigned by $u_a$, the better the output is for $a$.
We assume all utility functions satisfy \emph{Solution Preference}
\cite{gtleader,gtblocks} which guarantees that an agent never has an incentive to cause the algorithm to fail.

\begin{definition}[Solution Preference]
	The utility function $u_a$ of an agent $a$ never assigns a higher utility to an erroneous output
	than to a legal output:
	$$\forall{a,O_L\in\Theta_L,O_E\in\Theta_E}: u_a(O_L) \geq u_a(O_E)$$
\end{definition}

We differentiate \emph{legal} output vectors, which ensure the output is valid and not erroneous\footnotemark,
from the \emph{correct} output vectors, which are output vectors that are a result of a correct
execution of the algorithm, i.e., without any deviation.
The solution preference guarantees agents never prefer an erroneous output over a legal one;
however, they may prefer a \emph{legal} but \emph{incorrect} output.

\footnotetext{
	For example in a consensus algorithm,
	an output of  $1$ when all agents received $0$ as input is \emph{erroneous}.
	An output of $1$ in an algorithm deciding according to the majority,
	where some agents received $1$ as input but the majority received $0$,
	is still \emph{legal} but also \emph{incorrect}.
}

For simplicity, we assume agents only have preferences over their own output,
and each agent $a$ prefers a \emph{single} value $p_a$.
We normalize the utility function for any agent $a$ to Equation~\ref{eq_norm};
however, our results hold for \emph{any} utility function that satisfies Solution Preference.

\begin{equation}
\label{eq_norm}
u_a(O)=\begin{cases}
1 & o_a=p_a$ and $O \in \Theta_L
\\ 0 & o_a \neq p_a$ or $O \in \Theta_E\end{cases}
\end{equation}

At each round in the algorithm execution, each agent $a$ chooses which step to take for the following round,
where a step consists of all actions for that round,
e.g., drawing a random number, performing internal computation, the contents of any sent messages.
If a step exists that does not follow the algorithm and improves its \emph{expected} utility,
we then say the agent has an \emph{incentive} to deviate.

\begin{definition}[Expected Utility]
	For each possible output vector $O$, let $x_O(s)$ be the probability
	estimated by agent $a$ that $O$ is output by the algorithm if $a$ takes step $s$
	and all other agents follow the algorithm.
	The \emph{expected utility} of agent $a$ for step $s$ is:
	$$\mathbb{E}_{s}[u_a] = \sum\limits_{o\in\Theta} x_O(s) u_a(O)$$
\end{definition}

If no agent has an incentive to deviate from the algorithm,
when assuming all other agents follow the algorithm, we say that the algorithm is in equilibrium.
In this paper we assume a single deviating agent, i.e., there are no coalitions of agents.

\begin{definition}[Distributed Equilibrium]
	Let $\Lambda(r)$ denote the next step of algorithm $\Lambda$ in round $r$.
	$\Lambda$ is in equilibrium if for any step $s$, at any round $r$
	of every possible execution of $\Lambda$:
	$$\forall{a,r,s}: \mathbb{E}_{\Lambda(r)}[u_a] \geq \mathbb{E}_s[u_a] $$
\end{definition}

\subsection{Duplication}
\label{model_dup}

When $n$ is not a-priori known to agents,
an agent $a$ can deviate by simulating $m$ imaginary agents.
The size of the $id$ space is $L$,
thus each duplicated agent must have a unique $id$ and duplication involves a risk
of choosing an $id$ that already exists, causing the algorithm failure.

We assume all algorithms force a deviating agent to commit to its duplication early,
by starting the algorithm with a process that tries to learn all $id$s,
such as the Wake-Up \cite{gtblocks} algorithm\footnotemark.
\footnotetext{
	We can add the Wake-Up building block at the beginning of any algorithm,
	otherwise the deviating agent can only gain by duplicating itself at a later stage.
}

The number of duplications of a deviating agent (not including itself)
is denoted by $m$.
We assume $m$ and the $id$s of all $m$ duplicated agents are chosen at round $0$,
before the algorithm starts.
The case where $m$ and $id$s are chosen during wake-up
is left as a direction for future research (Section~\ref{section_discussion}).

Denote by $e_0(x)$ the expected utility of an agent at round $0$ if it follows the protocol
and the network consists of $x$ agents.
The value of $e_0(x)$ is identical for all agents,
else the algorithm is not in equilibrium\footnotemark.
Denote by $e_m(x)$ the expected utility at round $0$ if the agent deviates by duplicating $m$ agents
\emph{successfully}, i.e., disregarding the risk of duplicated $id$s.
\footnotetext{
	An agent's unique knowledge at the beginning of the algorithm is its $id$ and input,
	thus if different expected utilities exist, an agent has an incentive to deviate
	from the algorithm and pretend it has a different $id$ or input value.
}

\subsection{Leader Election}

In the Leader Election problem, each agent $a$ outputs $o_a \in \{0,1\}$,
where $o_a=1$ means that $a$ was elected leader, and $o_a=0$ means otherwise.
The set of legal output vectors is defined as:
$$O_L = \{O | \exists{a}:o_a=1, \forall{b\neq a}: o_b=0 \}$$

We assume a \emph{fair} leader election\cite{gtleader} where, at the beginning of the algorithm,
each agent has an equal chance to be elected leader, i.e., $e_0(i) = \frac{1}{i}$.
We assume agents prefer $1$, i.e., to be elected leader.

\subsection{Knowledge Sharing}

The Knowledge Sharing problem (from \cite{gtcolor}, adapted from \cite{gtblocks}) is defined as follows:

\begin{enumerate}
	
	\item Each agent $a$ has a private input $i_a$ and a function $q$,
	where $q$ is identical at all agents.
	
	\item An output is \emph{legal} if all agents output the same value, i.e.,
	$O \in \Theta_L \Longleftrightarrow \forall{a,b: O[a] = O[b]}$.
	
	\item An output is \emph{correct} if all agents output $q(I)$ where $I=\{i_1,\dots,i_n\}$.
	
	\item The function $q$ satisfies the Full Knowledge property\cite{gtblocks,gtcolor}:
	\begin{definition}[Full Knowledge]
		\label{def_fullknow}
		When one or more input values are not know,
		any output in the range of $q$ is \emph{equally} possible.
		Formally, for any $1 \leq j \leq k$ denote $z_y=|\{x_j|q(i_1,\dots,x_j,\dots,i_m)=y\}|$.
		For any possible output $y$ in the range of $q$, $z_y$ is the same\footnotemark.
		\footnotetext{
			Assuming input values are drawn uniformly, otherwise the definition of $z_y$
			can be expanded to the sum of probabilities over every input value.
		}
	\end{definition}
\end{enumerate}

We assume that each agent $a$ prefers a certain output value $p_a$.
Following \cite{gtcolor}, in this paper we only discuss Knowledge Sharing in ring graphs.

\section{Solution Basis}
\label{section_basis}


The threshold $t$ is a-priori known to all agents and satisfies $t \leq n$.
Our goal is to find the minimal $t$ (as a function of $L$) that satisfies Equation~\ref{eq_basis},
i.e., the minimal threshold for which the algorithm is in equilibrium for any number of duplications $m$.
Here we show a method that constructs Equation~\ref{eq_basis}.

\begin{equation}
\label{eq_basis}
\sum\limits_{x=t}^L e_0(x) \geq \max_m \sum\limits_{x=t}^{L-m} p_m(x) e_m(x)
\end{equation}

Calculating the minimal threshold $t$ from Equation~\ref{eq_basis} can be done in $O(L^3)$ running time.
In Section~\ref{sub_enhance} we describe two enhancements that improve this running time.

\subsection{Constructing Equation \ref{eq_basis}}

Recall $e_0(x)$ is the expected utility if following the algorithm and $n=x$,
and $e_m(x)$ is the expected utility if one agent duplicates itself $m$ times \emph{successfully}.
When an agent $a$ does not know the actual number of agents in the network it assumes $n \sim U[t,L]$.
The expected utility of $a$, considering all possible values of $n$
for a threshold $t$ and $id$ space of size $L$, is $f(L,t)$ given in Equation~\ref{eq_no_cheat}:
\begin{equation}
\label{eq_no_cheat}
f(L,t) = \frac{1}{L-t+1} \sum\limits_{x=t}^L e_0(x)
\end{equation}

When the number of duplication $m$ satisfies $m+n>L$,
agent $a$ can easily be caught since counting the number of agents in the network
will exceed $L$ (and an $id$ collision exists), thus its expected utility is $0$,
i.e., $\forall{x>(L-m)}: e_m(x) = 0$.

When $m+n \leq L$,
agent $a$ can be caught if one or more of the chosen $id$s for its duplicates
already exists in the network.
Agent $a$ knows only its own $id$ at the beginning of the algorithm and
since $id$s are chosen in advance (see model, Section~\ref{section_model}),
$a$ needs to choose $m$ $id$s out of $L-1$ possibilities,
of which only $L-n$ are available, $id$s that do not already exist in the network.
Denote $p_m(x)$ the probability of choosing $m$ $id$s that do not collide with any $id$
in a network of $x$ agents, such that: $p_m(x) = \frac{\binom{L-x}{m}}{\binom{L-1}{m}} $.
The expected utility of agent $a$ for $1 \leq m \leq (L-x)$ is thus:
$p_m(x) e_m(x)$.

Finally, considering all possible values of $n$,
the expected utility of agent $a$ for a threshold $t$ and $id$ space of size $L$ is $g(L,t,m)$
given in Equation~\ref{eq_cheat}.
An agent thus chooses $m=\argmax_m g(L,t,m)$.
\begin{equation}
\label{eq_cheat}
g(L,t,m) = \frac{1}{L-t+1} \sum\limits_{x=t}^{L-m} p_m(x) e_m(x)
\end{equation}

When agents a-priori know $t$ and $L$,
the algorithm is in equilibrium when $f(L,t) \geq \max_m g(L,t,m)$,
i.e., $\argmax_m g(L,t,m) = 0$, since by definition $f(L,t) = g(L,t,0)$.
This holds when Equation~\ref{eq_basis_complex} is satisfied,
from which we can easily deduce Equation~\ref{eq_basis}.
\begin{equation}
\label{eq_basis_complex}
\begin{aligned}
\frac{1}{L-t+1} \sum\limits_{x=t}^L e_0(x) \geq
\max_m \frac{1}{L-t+1} \sum\limits_{x=t}^{L-m} p_m(x) e_m(x)
\end{aligned}
\end{equation}

\subsection{Enhancements}
\label{sub_enhance}

The minimal threshold $t$ can be found in $O(L^3)$ time,
by going through all possible values of $t$ and $m$
($L$ possible values for each, and each computation takes $O(L)$ time).

Here we show two enhancements that can be applied for certain algorithms to improve the running time.
The first enhancement improves the running time to $O(L^2\log L)$ and applies to most algorithms,
and the second enhancement improves it to $O(L^2)$ and only applies to specific algorithms.
Applying both enhancements improves the running time to $O(L \log L)$.


\subsubsection{Linear Threshold}
\label{enhance1}

For most algorithms, some $L_0$ exists such that for any $L>L_0$
a specific pivot value $t_0$ exists,
such that for any $t \geq t_0$ the algorithm is in equilibrium,
and for any $t < t_0$ it is not in equilibrium.

Following equations \ref{eq_no_cheat} and \ref{eq_cheat},
this applies to algorithms which satisfy Equation~\ref{eq_linear}:
\begin{equation}
\label{eq_linear}
\exists{L_0}\forall{L>L_0} \exists{t_0} \forall{t}:
\begin{cases}
	t \geq t_0 \Longleftrightarrow f(L,t) \geq \max_m g(L,t,m) \\
	t < t_0 \Longleftrightarrow f(L,t) < \max_m g(L,t,m)
\end{cases}
\end{equation}

For algorithms that satisfy Equation~\ref{eq_linear},
we can use binary search to find the minimal threshold that satisfies Equation~\ref{eq_basis}.
Thus, we improve the running time to $O(L^2 \log L)$.

\subsubsection{Limited Duplications}
\label{enhance2}

For some algorithms it can be enough to look at a specific number of duplications $m'$,
such that if $m$ exists for which an agent $a$ has an incentive to deviate,
then $a$ also has an incentive to deviate with $m'$ duplications,
i.e., these algorithms satisfy Equation~\ref{eq_limited} \emph{for any $L$ and $t$}.

\begin{equation}
\label{eq_limited}
\exists{m}: g(L,t,m) > f(L,t) \Longleftrightarrow g(L,t,m') > f(L,t)
\end{equation}

For algorithms that satisfy Equation~\ref{eq_limited} for any $L$ and $t$
we only need to examine a single duplication value $m'$.
Thus, we improve the running time by a factor of $L$ to $O(L^2)$.

For an algorithm that satisfies Linear Threshold,
the running time is further improved to $O(L \log L)$.

\section{Leader Election}
\label{section_le}

Here we apply the method from Section~\ref{section_basis} to the problem of Leader Election,
and show that it satisfies both the Linear Threshold and Limited Duplications enhancements (Section~\ref{sub_enhance}),
thus the minimal threshold $t$ can be found in $O(L^2 \log L)$ time.
Furthermore, we show a constant-time approximation of $t\approx\frac{L}{5}$
for which Leader Election is in equilibrium, and the threshold is close to minimal.

\subsection{Equilibrium}
\label{le1}
Recall that $e_0(x)$ denotes the expected utility of an honest agent at the beginning of the algorithm, assuming there are $x$ agents in the network,
and $e_m(x)$ denotes the expected utility of an agent that deviates from the algorithm and duplicates itself $m$ times \emph{successfully}. 
In a fair Leader Election algorithm $e_0(x)=\frac{1}{x}$
and $e_m(x)=\frac{1+m}{x+m}$,
since the cheater emulates $m$ agents in addition to itself,
each having an equal chance of being elected leader from a total of $x+m$ agents 
(including duplications).

The probability of successfully duplicating $m$ times is $p_m(x)= \frac{\binom{L-x}{m}}{\binom{L-1}{m}}$.
Following Equation~\ref{eq_basis} from our solution basis (Section~\ref{section_basis}), the threshold to guarantee equilibrium is such $t$ that satisfies:
\begin{equation}
\label{eq_le_equil}
\sum\limits_{x=t}^L \frac{1}{x} \geq \max_m \sum\limits_{x=t}^{L-m} \frac{\binom{L-x}{m}}{\binom{L-1}{m}} \frac{1+m}{x+m}
\end{equation}

\subsection{Limited Duplication Enhancement}
\label{le2}
Here we show that in order to check equilibrium for a given $L,t$
it suffices to check a single duplication, i.e., $m=1$.
Simply put, it means that $\exists m : g(L,t,m) > f(L,t) \Longleftrightarrow g(L,t,1) > f(L,t)$.

\begin{theorem}
	Leader Election satisfies the requirements for Limited Duplications (Section~\ref{enhance2})
\end{theorem}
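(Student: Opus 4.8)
The $\Leftarrow$ direction is immediate: taking the witness $m=1$ shows that $g(L,t,1)>f(L,t)$ already supplies an $m$ with $g(L,t,m)>f(L,t)$. For $\Rightarrow$ it suffices to prove $g(L,t,1)\ge g(L,t,m)$ for every $m\ge 1$, since then $g(L,t,m)>f(L,t)$ forces $g(L,t,1)>f(L,t)$. The plan is to establish the stronger claim that $g(L,t,\cdot)$ is non-increasing in $m$, so that $m=1$ maximises it over $m\ge 1$. After cancelling the common factor $\frac{1}{L-t+1}$ in Equation~\ref{eq_cheat}, this reduces to showing $S_m\ge S_{m+1}$ for every admissible $m$, where $S_m:=\sum_{x=t}^{L-m}h_m(x)$ and $h_m(x):=\frac{\binom{L-x}{m}}{\binom{L-1}{m}}\cdot\frac{1+m}{x+m}$.

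A direct term-by-term comparison fails. Using $p_{m+1}(x)=\frac{L-x-m}{L-1-m}\,p_m(x)$ together with $e_{m+1}(x)=\frac{m+2}{x+m+1}$, one finds $\frac{h_{m+1}(x)}{h_m(x)}=\frac{L-x-m}{L-1-m}\cdot\frac{(m+2)(x+m)}{(m+1)(x+m+1)}$; the first factor is at most $1$ (the extra duplicate raises the collision risk) while the second exceeds $1$ (it contributes an extra vote), and for $x$ near $t$ the product exceeds $1$. Hence individual summands can \emph{increase} with $m$, so $S_m\ge S_{m+1}$ is genuinely global. The probabilistic reading explains the tension: conditioning on the first $m$ chosen $id$s being free (the only event in which the $m$-duplication strategy yields positive utility), the $(m+1)$-st $id$ is free with probability $\frac{L-x-m}{L-1-m}$, raising the winning chance to $e_{m+1}(x)$, but collides with probability $\frac{x-1}{L-1-m}$, destroying all utility; the weight $\binom{L-x}{m}$ concentrates its mass on large $x$, exactly where free $id$s are scarce and the extra duplicate is most likely to be fatal.

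To turn this into a proof I would pass to the reindexed form obtained by the substitution $i=x+m$ (so that $L-x=L+m-i$), namely $S_m=\frac{1+m}{\binom{L-1}{m}}\sum_{i=t+m}^{L}\binom{L+m-i}{m}\,\tfrac1i$, and compare consecutive $m$ by Abel summation against the positive decreasing sequence $\tfrac1i$. Writing $w_m(i):=\frac{(1+m)\binom{L+m-i}{m}}{\binom{L-1}{m}}$ for the weight attached to $\tfrac1i$, summation by parts gives $S_m-S_{m+1}=\sum_{i=t+m}^{L}\big(w_m(i)-w_{m+1}(i)\big)\tfrac1i$, which is non-negative provided every prefix sum $D_I:=\sum_{i=t+m}^{I}\big(w_m(i)-w_{m+1}(i)\big)$ is non-negative. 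Two facts make this plausible and reduce it to a single binomial inequality: $w_m$ is supported one index earlier than $w_{m+1}$ (so $D_{t+m}=w_m(t+m)>0$), and by the hockey-stick identity the total masses satisfy $\sum_i w_{m+1}(i)=\frac{L-t-m}{L-1-m}\sum_i w_m(i)<\sum_i w_m(i)$, so the cumulative weight of the $(m+1)$-strategy never overtakes that of the $m$-strategy.

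The crux — and the step I expect to be the main obstacle — is verifying $D_I\ge 0$ for every intermediate $I$, since the summand changes sign and only the global balance saves the inequality. By the hockey-stick identity each $D_I$ has a closed form as a difference of binomial coefficients, so the claim becomes an explicit, elementary prefix-dominance inequality between partial sums of $\binom{L+m-i}{m}$ and $\binom{L+m+1-i}{m+1}$. Once this single-crossing property is settled, Abel summation yields $S_m\ge S_{m+1}$, hence $g(L,t,1)=\max_{m\ge 1}g(L,t,m)$, and the equivalence of Equation~\ref{eq_limited} with $m'=1$ follows.
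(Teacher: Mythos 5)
Your $\Leftarrow$ direction is fine and matches the paper. The $\Rightarrow$ direction, however, rests on the claim that $m\mapsto g(L,t,m)$ is non-increasing on $m\ge 1$, and that claim is false, so the prefix-dominance inequality $D_I\ge 0$ that you correctly identify as the crux cannot be established --- the conclusion $S_m\ge S_{m+1}$ it would imply already fails. Concretely, take $L=100$, $t=10$: using $H_{100}-H_{10}\approx 2.2584$ one gets
$g(100,10,1)=\frac{2}{91\cdot 99}\bigl(101(H_{100}-H_{10})-90\bigr)\approx 0.0307$, while
$g(100,10,2)=\frac{3}{91\cdot\binom{99}{2}}\cdot\frac{1}{2}\bigl(10302(H_{100}-H_{11})-203\cdot 89+4984\bigr)\approx 0.0314$,
so $g(L,t,2)>g(L,t,1)$. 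Your own local analysis already showed the summand ratio exceeds $1$ for $x$ near $t$; when $t$ is small relative to $L$ the ``global balance'' does not rescue the inequality. So this is not merely an unfinished step but a dead end.

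The statement itself survives because what is actually needed is weaker: one must show that whenever $\max_{m}g(L,t,m)>g(L,t,0)=f(L,t)$, already $g(L,t,1)>f(L,t)$. The paper gets this from (an informally argued) \emph{unimodality} of $m\mapsto g(L,t,m)$ anchored at $m=0$: the collision factor decays roughly like $\bigl(\frac{L-x}{L-1}\bigr)^{m}$ while the gain factor $\frac{1+m}{x+m}$ is bounded, so $g$ either decreases from $m=1$ onward or rises to a single peak $m^{*}$ and then decreases; in the latter case the rise from $m=0$ through $m=1$ toward $m^{*}$ gives $g(L,t,1)>g(L,t,0)=f(L,t)$ even though $g(L,t,m^{*})>g(L,t,1)$. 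In the counterexample above this is exactly what happens ($f\approx 0.0259<0.0307<0.0314$). If you want to repair your argument, redirect the Abel-summation machinery at single-peakedness (equivalently, that $S_{m+1}>S_m$ implies $S_m>S_{m-1}$ down to $S_1>S_0$), not at monotone decrease from $m=1$; be aware that the paper's own justification of unimodality is heuristic, so a rigorous version of that step would in fact strengthen the paper.
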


\begin{proof}
Right-to-left comes automatically, since $m'=1$ is such $m$ for which $g(L,t,m) > f(L,t)$ so existence holds. For the left-to-right claim, let us notice that for a fixed $n$, $\binom{n}{k}$ behaves like $O(\frac{n^k}{k!})$ as $k$ increases. This shows us that for a fixed value of $x$, the element $\frac{\binom{L-x}{m}}{\binom{L-1}{m}}$ decreases like $(\frac{L-x}{L-1})^m$ as a function of $m$. Meanwhile, for a fixed value of $x$, the element $\frac{1+m}{x+m}$ shows only bounded increase as a function of $m$, and is much less dominant than the former element for larger values of $m$. It becomes clear that either this function starts decreasing right from $m=1$, or that there is one global maximum $m^*$, at which the effect of $\frac{\binom{L-x}{m}}{\binom{L-1}{m}}$ becomes more dominant than the effect of $\frac{L-x}{L-1}$. So when $m^* > 1$, we get that $g(L,t,m^*) > g(L,t,1) > f(L,t)$. 
\end{proof}

\subsection{Linear Threshold Enhancement}
\label{le3}

\begin{theorem}
	Leader Election satisfies the requirements for Linear Threshold (Section~\ref{enhance1})
\end{theorem}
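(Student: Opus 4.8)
The plan is to reduce the equilibrium condition to the single duplication $m=1$ using the Limited Duplications theorem just established, and then to analyze the sign of the resulting discrete function of $t$. Since that theorem gives $\exists m: g(L,t,m) > f(L,t) \Longleftrightarrow g(L,t,1) > f(L,t)$, the algorithm is in equilibrium at $(L,t)$ exactly when $f(L,t) \geq \max_m g(L,t,m)$, which is equivalent to $f(L,t) \geq g(L,t,1)$. After clearing the common positive factor $\frac{1}{L-t+1}$, I would define
$$D(t) = \sum_{x=t}^{L}\frac{1}{x} - \sum_{x=t}^{L-1}\frac{2(L-x)}{(L-1)(x+1)},$$
so that equilibrium at $(L,t)$ is equivalent to $D(t)\geq 0$. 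The goal then becomes showing that $D$ has the single-crossing behavior of Equation~\ref{eq_linear}: there is a pivot $t_0$ with $D(t)<0$ for $t<t_0$ and $D(t)\geq 0$ for $t\geq t_0$.

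The central computation is the forward difference. I would show
$$D(t+1)-D(t) = -\frac{1}{t} + \frac{2(L-t)}{(L-1)(t+1)},$$
and then, clearing denominators, that $D(t+1)-D(t)\geq 0$ is equivalent to the quadratic inequality $2t^2-(L+1)t+(L-1)\leq 0$. Its discriminant is $(L+1)^2-8(L-1)=(L-3)^2$, a perfect square, so the roots are exactly $t=1$ and $t=\frac{L-1}{2}$. Hence $D$ is nondecreasing on $1\leq t\leq \frac{L-1}{2}$ and strictly decreasing for $t>\frac{L-1}{2}$. The subtle point is that $D$ is therefore \emph{not} globally monotone.

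To rule out a second sign change, I would evaluate $D$ at the right endpoint: the second sum is empty at $t=L$, giving $D(L)=\frac{1}{L}>0$. Since $D$ is strictly decreasing on $\left[\frac{L-1}{2},L\right]$ and positive at its right endpoint, it stays positive throughout that interval. On the increasing region $\left[1,\frac{L-1}{2}\right]$, monotonicity forces at most one crossing from negative to nonnegative. Combining the two regions yields a unique pivot $t_0$, establishing Equation~\ref{eq_linear}. For the ``$\exists L_0 \forall L>L_0$'' clause, I would use that for large $L$ the crossing falls near $t\approx\frac{L}{5}$ (as quantified in the constant-time approximation), which lies strictly inside the increasing region $t<\frac{L-1}{2}$; this guarantees both that $D$ is genuinely negative for small admissible $t\geq 3$ and that the pivot is nontrivial.

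The main obstacle I anticipate is precisely this non-monotonicity of $D$: a direct attempt to prove $D$ increasing everywhere fails, and the argument instead hinges on pairing the monotonicity on $\left[1,\frac{L-1}{2}\right]$ with the endpoint bound $D(L)>0$ to certify that the decreasing tail cannot re-cross zero. Once that structural observation is in place, the remainder is the routine algebra of the quadratic and its perfect-square discriminant.
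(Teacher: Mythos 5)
Your proof is correct and takes essentially the same route as the paper: your forward difference $D(t+1)-D(t)$ and the quadratic $2t^2-(L+1)t+(L-1)$ with discriminant $(L-3)^2$ and roots $1,\tfrac{L-1}{2}$ are exactly the paper's Claim that $\tfrac{1}{t} < \tfrac{2}{L-1}\cdot\tfrac{L-t}{t+1}$ for $3 \le t \le \tfrac{L-1}{2}$, and your single-crossing argument on the increasing region is the paper's induction on $t \ge t_0$ in disguise. The only (valid) divergence is the tail $t > \tfrac{L-1}{2}$, where the paper shows $g(L,t,1) < f(L,t)$ directly by bounding $p_1(x) \le \tfrac12$, while you instead use that $D$ is strictly decreasing there together with the endpoint value $D(L)=\tfrac{1}{L}>0$.
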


\begin{proof}
	Following the previous section, it is enough to prove the Linear Threshold requirement for $m=1$.
	When Equation~\ref{eq_linear_le} is satisfied, so is Equation~\ref{eq_linear}.
	Thus, we prove the following:
	\begin{equation}
	\label{eq_linear_le}
	\forall{L} \exists{t_0}:
	\begin{cases}
	t \geq t_0 \Longleftrightarrow f(L,t) \geq g(L,t,1) \\
	t < t_0 \Longleftrightarrow f(L,t) < g(L,t,1)
	\end{cases}
	\end{equation}
	
	
	For $t=L$: $f(L,L)=\frac{1}{L}$ and it holds that $n=t=L$,
	thus the number of agents in the network is a-priori known to all agents.
	By our assumption that the algorithm is in equilibrium when $n$ is known
	$\forall{L}: g(L,L,1)=0 < f(L,L)$.
	
	Fix $L$ and assume $f(L,3) < g(L,3,1)$. We will prove the inverse as well.

	By the intermediate value theorem there exists $t_0$ such that
	$f(L,t_0-1) < g(L,t_0-1,1)$ and $f(L,t_0) \geq g(L,t_0,1)$.
	We will prove that $\forall{t\geq t_0}: f(L,t) \geq g(L,t,1)$ by induction.
	
	Base: by definition $f(L,t_0) \geq g(L,t_0,1)$.
	Step: assume the claim holds for $t$, we will prove for $t+1$.
	
	\begin{claim}
	\label{le_lt_claim1}
		For every $3 \leq t \leq \frac{L-1}{2}$:
		$ \frac{1}{t} < \frac{2}{L-1} \frac{L-t}{t+1} $
	\end{claim}
	\begin{proof}
		$$\frac{1}{t} - \frac{2}{L-1} \frac{L-t}{t+1}=
		\frac{(L-1)(t+1)-2t(L-t)}{t(L-1)(t+1)} =
		\frac{-Lt+L-t-1+2t^2}{t(L-1)(t+1)} $$
		
		We wish to find where this expression is negative.
		The denominator is obviously positive so we can disregard it.
		
		$$ -Lt+L-t-1+2t^2 = 2t^2-t(1+L)+L-1=0 $$
		$$ t = \frac{1+L \pm \sqrt{(1+L)^2 - 8(L-1)}}{4} =
			\frac{1+L \pm (L-3)}{4}=1,\frac{L-1}{2} $$
		
		The nominator is convex so at the mid-range, the expression is negative.
		We can disregard $1$ since $t \geq 3$.
	\end{proof}

	For $t \geq \frac{L+1}{2}$, the risk of failing the algorithm satisfies
	$p_m(x) \leq \frac{L-(L+1)/2}{L-1}=\frac{1}{2}$.
	Thus it holds that $g(L,t,1) < f(L,t)$:
	$$g(L,t,1) \leq \frac{1}{L-t+1} \sum\limits_{x=t}^{L-1} \frac{1}{2} \frac{2}{x+1} =
		\frac{1}{L-t+1} \sum\limits_{x=t}^{L-1} \frac{1}{x+1} <
		\frac{1}{L-t+1} \sum\limits_{x=t}^{L} \frac{1}{x} = f(L,t) $$
	By the same assignment, it also holds for $t\geq\frac{L}{2}$.
	By the induction hypothesis and Claim~\ref{le_lt_claim1} we get:
	$$ \sum\limits_{x=t}^L \frac{1}{x} > \frac{2}{L-1} \sum\limits_{x=t}^{L-1} \frac{L-x}{x+1} $$
	$$ \frac{1}{t} + \sum\limits_{x=t+1}^L \frac{1}{x} > \frac{2}{L-1} \frac{L-t}{t+1}
		+ \frac{2}{L-1} \sum\limits_{x=t}^{L-1} \frac{L-x}{x+1} $$
	$$ \sum\limits_{x=t+1}^L \frac{1}{x} > \frac{2}{L-1} \sum\limits_{x=t+1}^{L-1} \frac{L-x}{x+1} $$
	The first transition is by definition,
	and the second is derived from Claim~\ref{le_lt_claim1}.
	Thus, we proved the induction and satisfy Equation~\ref{eq_linear_le},
	and following the same steps it is also satisfied when $f(L,3) \geq g(L,3,1)$.
\end{proof}

\subsection{Threshold Bounds}
\label{le4}
	
\begin{lemma}
	In the Leader-Election problem, for a sufficiently large $L$,
	a rational agent will cheat if $t\le0.2L$, but will not cheat if $t\ge0.21L$.
\end{lemma}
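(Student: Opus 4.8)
The plan is to collapse the claim to a single analytic inequality in the ratio $c=t/L$ and then check it at the two endpoints $c=0.2$ and $c=0.21$. First I would invoke the Limited Duplications result, which lets me restrict to $m=1$: an agent cheats at a given $(L,t)$ exactly when $g(L,t,1)>f(L,t)$. Substituting $e_0(x)=1/x$, $e_1(x)=2/(x+1)$ and $p_1(x)=(L-x)/(L-1)$, the agent does \emph{not} cheat precisely when
\begin{equation*}
\sum_{x=t}^{L}\frac{1}{x}\ \ge\ \frac{2}{L-1}\sum_{x=t}^{L-1}\frac{L-x}{x+1}.
\end{equation*}
I would then use the Linear Threshold property already proven for Leader Election: there is a pivot $t_0$ with no-cheat iff $t\ge t_0$. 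Hence it suffices to show that an agent cheats at $t=0.2L$ and does not cheat at $t=0.21L$ (up to integer rounding of $t$): the former forces $t_0>0.2L$, so every $t\le 0.2L$ yields cheating, and the latter forces $t_0\le 0.21L$, so every $t\ge 0.21L$ yields equilibrium.

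The core is an asymptotic evaluation of both sides at $t=cL$ as $L\to\infty$. Since $1/x$ and $(L-x)/(x+1)$ are monotone decreasing in $x$ over the summation range, I would sandwich each sum between integrals by the integral test, incurring only $O(1/L)$ error. The left-hand side tends to $-\ln c$. For the right-hand side, writing $\frac{L-x}{x+1}=\frac{L+1}{x+1}-1$ yields the closed form $\frac{2}{L-1}\bigl[(L+1)\ln\frac{L+1}{t+1}-(L-t)\bigr]$, which converges to $2(c-1-\ln c)$. Thus, in the limit, the no-cheat condition $-\ln c\ge 2(c-1-\ln c)$ simplifies to
\begin{equation*}
h(c):=\ln c-2c+2\ \ge\ 0.
\end{equation*}

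Finally I would analyze $h$. One checks $h(1)=0$, that $h'(c)=1/c-2$ vanishes only at $c=\tfrac12$ with $h(\tfrac12)=1-\ln 2>0$, and that $h(c)\to-\infty$ as $c\to 0^{+}$; hence $h$ has a unique root $c^{*}\in(0,\tfrac12)$, is negative below it, and nonnegative on $[c^{*},1]$. A direct evaluation gives $h(0.2)=\ln 0.2+1.6\approx-0.009<0$ and $h(0.21)=\ln 0.21+1.58\approx 0.019>0$, so $c^{*}$ lies strictly between $0.2$ and $0.21$. The main obstacle is quantitative rather than conceptual: since $h(0.2)$ and $h(0.21)$ are each within about $0.02$ of zero, I must pin down $L_0$ large enough that the accumulated $O(1/L)$ gap between each discrete sum and its limiting integral is strictly smaller than these margins. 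Once such an $L_0$ is fixed, the \emph{discrete} difference $g(L,t,1)-f(L,t)$ inherits the sign of $-h(c)$ at $t=0.2L$ and $t=0.21L$, and combining this with the Linear Threshold reduction finishes the proof.
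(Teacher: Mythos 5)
Your proposal is correct and follows essentially the same route as the paper: reduce to $m=1$ via Limited Duplications, replace both sums by integrals (the paper uses the Chen--Qi harmonic bounds for $f$ and the integral test for $g$, but the limits agree with your $-\ln c$ and $2(c-1-\ln c)$), and check the resulting inequality $\ln c - 2c + 2 \geq 0$ at $c=0.2$ and $c=0.21$, which matches the paper's numerical comparison $1.609$ vs.\ $1.619$ and $1.560$ vs.\ $1.541$. Your explicit appeal to the Linear Threshold property to extend the two endpoint checks to all $t\le 0.2L$ and all $t\ge 0.21L$ is a point the paper leaves implicit, and your remaining obligation (fixing $L_0$ so the $O(1/L)$ discretization error stays below the $\approx 0.01$--$0.02$ margins) is exactly what the paper's explicit $f_{lb},f_{ub},g_{lb},g_{ub}$ bounds are there to discharge.
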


\begin{corollary}
	Leader Election is in equilibrium when all agents a-priori know $n \ge 0.21 L$
\end{corollary}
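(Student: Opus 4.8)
The plan is to convert the two enhancements already proven for Leader Election into a reduction of the lemma to just two point evaluations. By the Limited Duplications enhancement (Section~\ref{le2}), equilibrium is governed entirely by $m=1$, so the quantity that matters is $f(L,t)-g(L,t,1)$ with $p_1(x)=\frac{L-x}{L-1}$ and $e_1(x)=\frac{2}{x+1}$; equilibrium (no cheating) holds at threshold $t$ exactly when $f(L,t)\ge g(L,t,1)$. By the Linear Threshold enhancement (Section~\ref{le3}), for all sufficiently large $L$ there is a pivot $t_0$ such that this inequality holds precisely when $t\ge t_0$. Consequently it suffices to show $f<g$ at $t=\lfloor 0.2L\rfloor$, which forces $t_0>0.2L$ and hence cheating for every $t\le 0.2L$, and $f\ge g$ at $t=\lceil 0.21L\rceil$, which forces $t_0\le 0.21L$ and hence no cheating for every $t\ge 0.21L$.

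Next I would evaluate both sides asymptotically by setting $t=cL$ and replacing the sums by integrals. On the honest side the harmonic sum gives $f(L,cL)=\sum_{x=cL}^{L}\tfrac1x=-\ln c+O(1/L)$. On the cheating side, substituting $u=x+1$ yields $\int_{cL}^{L}\tfrac{L-x}{x+1}\,dx=(L+1)\ln\tfrac{L+1}{cL+1}-L(1-c)$; since the summand $\tfrac{L-x}{x+1}$ is monotone its sum and integral differ by $O(1)$, and after multiplying by $\tfrac{2}{L-1}$ one obtains $g(L,cL,1)=2(c-1-\ln c)+O(1/L)$. Subtracting, the difference converges to a clean function of $c$:
$$f(L,cL)-g(L,cL,1)\;\xrightarrow{L\to\infty}\;-\ln c-2(c-1-\ln c)=\ln c-2c+2=:h(c).$$
Note that rounding $cL$ to an integer threshold perturbs each of $f$ and $g$ by only $O(1/L)$, so it is absorbed into these estimates.

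I would then check the sign of $h$ at the two target densities. A direct computation gives $h(0.2)=\ln 0.2-0.4+2\approx-0.009<0$ and $h(0.21)=\ln 0.21-0.42+2\approx+0.019>0$, so the (increasing on $(0,\tfrac12)$) function $h$ has its root strictly between $0.2$ and $0.21$. Because $h(0.2)$ and $h(0.21)$ are bounded away from $0$ by fixed margins ($\approx0.009$ and $\approx0.019$) while the asymptotic error above is $O(1/L)$, for all sufficiently large $L$ the sign of the discrete difference $f(L,cL)-g(L,cL,1)$ matches the sign of $h(c)$ at $c=0.2$ and $c=0.21$. This delivers $f<g$ at $\lfloor 0.2L\rfloor$ and $f\ge g$ at $\lceil 0.21L\rceil$, and the Linear Threshold pivot then sandwiches $t_0\in(0.2L,\,0.21L]$, proving the lemma.

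The step I expect to be the main obstacle is the error control in the second paragraph: I must verify that the $O(1/L)$ gaps between the harmonic and rational sums and their integrals, together with the integer-rounding perturbation, are genuinely smaller than the fixed margins $|h(0.2)|$ and $|h(0.21)|$ once $L$ is large. Everything else—the reduction through Limited Duplications and Linear Threshold, and the elementary evaluation and sign analysis of $h(c)=\ln c-2c+2$—is routine once these uniform estimates are established.
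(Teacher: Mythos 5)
Your proposal is correct and follows essentially the same route as the paper: reduce to $m=1$ via Limited Duplications, approximate the harmonic sum and the sum $\frac{2}{L-1}\sum_{x=t}^{L}\frac{L-x}{x+1}$ by logarithms/integrals, and check the sign of the difference at $t=0.2L$ and $t=0.21L$ (your $h(0.2)\approx-0.009$ and $h(0.21)\approx+0.019$ are exactly the paper's gaps $1.609<1.619$ and $1.560>1.541$). The only differences are presentational: the paper uses explicit two-sided bounds $f_{lb},f_{ub},g_{lb},g_{ub}$ built from the Chen--Qi harmonic estimates instead of a limit function $h(c)=\ln c-2c+2$ with $O(1/L)$ error, and you are slightly more explicit than the paper in invoking Linear Threshold to extend the two point evaluations to all $t\le 0.2L$ and $t\ge 0.21L$.
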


\begin{proof}
Since the Limited Duplication Enhancement applies to the Leader-Election problem, the condition for equilibrium at Equation~\ref{eq_le_equil} reduces to the following:
	\begin{equation}
	\label{eq_le_equil1}
	\sum\limits_{x=t}^L \frac{1}{x} \geq \frac{2}{L-1} \sum\limits_{x=t}^{L} \frac{L-x}{x+1} 
	\end{equation}

In the following sections we will show these greater and lower bounds:
$$ f_{lb}(L,t) \le f(L,t) \le f_{ub}(L,t)  $$
$$ g_{lb}(L,t,1) \le g(L,t,1) \le g_{ub}(L,t,1)$$
	
	\begin{claim}
		\label{int_lb}
		If function $h(x)$ is non-negative and monotonically decreasing in interval $I=[\alpha,\beta]$, and $a<b\in I\cap\mathbb{Z}$ then $\sum_{k=a}^{b}h(k) \ge \int_{a}^{b}h(x)\dx$.
	\end{claim}
	\begin{proof}
		$\sum_{k=a}^{b}h(k) = \sum\limits_{k=a}^b \int_k^{k+1} h(k)\dx \geq
		\sum_{k=a}^{b-1}\int_{k}^{k+1}h(k)\dx \ge \sum_{k=a}^{b-1}\int_{k}^{k+1}h(x)\dx = \int_{a}^{b}h(x)\dx$
	\end{proof}
	\begin{claim}
		\label{int_ub}
		For the same conditions, $\sum_{k=a}^{b}h(k) \le \int_{a}^{b+1}h(x-1)\dx$.
	\end{claim}
	\begin{proof}
		$\sum_{k=a}^{b}h(k) = \sum_{k=a}^{b}\int_{k}^{k+1}h(k)\dx \le
		\sum_{k=a}^{b}\int_{k}^{k+1}h(x-1)\dx = \int_{a}^{b+1}h(x-1)\dx$
	\end{proof}
	
	In \cite{bestharmonic} the authors define the following upper and lower bounds to the Harmonic Sequence:
	\begin{equation}
	\label{harmonic_b}
	\frac{1}{2n + \frac{1}{1-\gamma} -2}  \le H_n -\ln{n} -\gamma < \frac{1}{2n + \frac{1}{3}}
	\end{equation}
	
	\begin{itemize}
		\item $f_{lb}(L,t)$: 
		$$f(L,t) = \sum\limits_{x=t}^L \frac{1}{x} = H_L - H_{t-1} > \underbrace{\frac{1}{2L + \frac{1}{1-\gamma} -2} +\ln{L} +\gamma}_\text{lower-bound} - \underbrace{\left(\frac{1}{2(t-1) + \frac{1}{3}} + \ln{(t-1)} +\gamma\right)}_\text{upper bound} $$
		$$ = \ln{(\frac{L}{t-1})} + \frac{2(t-1) -2L - \frac{1}{1-\gamma} + \frac{7}{3} }{(2L+\frac{1}{1-\gamma}-2)(2(t-1)+\frac{1}{3})} \equiv f_{lb}(L,t)$$
		
		\item $f_{ub}(L,t)$:
		$$ f(L,t) = \sum\limits_{x=t}^L \frac{1}{x} = H_L - H_{t-1} < \underbrace{\frac{1}{2L + \frac{1}{3}} + \ln{L} +\gamma}_\text{upper-bound} - \underbrace{\left(\frac{1}{2(t-1) + \frac{1}{1-\gamma} -2} +\ln{(t-1)} +\gamma\right)}_\text{lower bound} $$
		$$ = \ln{(\frac{L}{t-1})} + \frac{2(t-1) -2L + \frac{1}{1-\gamma} - \frac{7}{3} }{(2L+\frac{1}{3})(2(t-1)+\frac{1}{1-\gamma}-2)} \equiv f_{ub}(L,t)$$
		
		\item $g_{lb}(L,t,1)$:
		Using Claim~\ref{int_lb}, $g(L,t,1) \ge \frac{2}{L-1} \int_{t}^{L} \frac{L-x}{x+1}\dx = ... = \frac{2}{L-1}\left((L+1)\ln(\frac{L+1}{t+1})+(t-L)\right)$
		
		\item $g_{ub}(L,t,1)$:
		Using Claim~\ref{int_ub}, $g(L,t,1) \le \frac{2}{L-1} \int_{t}^{L} \frac{L-x+1}{x}\dx = ... = 2\cdot\frac{L+1}{L-1}\ln(\frac{L+1}{t}) -2\cdot\frac{L+1-t}{L-1}$
	\end{itemize}

\subsection{Lower Bound}
Assume $L$ is sufficiently large, and $t=0.2\cdot L$.
$$f_{ub}(L,0.2 L) = \ln\left(\frac{L}{0.2 L-1}\right) + \frac{-1.6 L + \frac{1}{1-\gamma} - \frac{13}{3} }{(2L+\frac{1}{3})(0.4 L +\frac{1}{1-\gamma}-4)} \approx \ln(5) - \underbrace{\frac{1.6L +O(1)}{0.8L^2 + O(L)}}_{\approx0} \approxeq 1.609$$
$$g_{lb}(L,0.2 L,1) = \frac{2}{L-1}\left((L+1)\ln\left(\frac{L+1}{0.2 L+1}\right)-0.8 L\right) \approx 2\ln(5) - 1.6 \approxeq 1.619$$

$$f(L,t) < f_{ub}(L,0.2 L) \approx 1.609 < 1.619 \approx g_{lb}(L,0.2 L, 1) < g(L,t,1)$$

Since $f(L,t) < g(L,t,1)$ then the rational agent will cheat.

\subsection{Upper Bound}
Assume $L$ is sufficiently large, and $t=0.21\cdot L$.
$$g_{ub}(L,0.21 L, t) = 2\cdot\frac{L+1}{L-1}\ln\left(\frac{L+1}{0.21 L}\right) -2\cdot\frac{0.79 L + 1}{L-1} \approx 2\ln(0.21^{-1}) -2\cdot0.79 \approxeq 1.541$$
$$f_{lb}(L,0.21 L) = \ln{\left(\frac{L}{0.21 L-1}\right)} + \frac{-1.58 L - \frac{1}{1-\gamma} + \frac{1}{3} }{(2L+\frac{1}{1-\gamma}-2)(0.42 L-\frac{5}{3})} \approx \ln(0.21^{-1}) - \underbrace{\frac{1.58 L + O(1)}{0.84 L^2 + O(L)}}_{\approx 0} \approxeq 1.560$$

$$g(L,t,1) < g_{ub}(L,0.2 L,1) \approx 1.541 < 1.560 \approx f_{lb}(L,0.2 L) < f(L,t)$$

Since $f(L,t) > g(L,t,1)$ the rational agent will not cheat.
\end{proof}

\section{Knowledge Sharing}
\label{section_ks}

Here we apply the method from Section~\ref{section_basis} to the problem of Knowledge Sharing
\emph{in a ring},
and show that it satisfies the Linear Threshold enhancement (Section~\ref{enhance1}),
thus the minimal threshold $t$ can be found in $O(L^2 \log L)$ time.

Recall that $e_0(x)$ denotes the expected utility of an honest agent at the beginning of the algorithm,
assuming there are $x$ agents in the network, and $e_m(x)$ denotes the expected utility
of an agent that deviates from the algorithm and duplicates itself $m$ times \emph{successfully}.
For Knowledge Sharing with $k$ possible outputs,
due to the Full Knowledge property (Definition~\ref{def_fullknow})
it holds that $\forall{x}: e_0(x)=\frac{1}{k}$.

In \cite{gtcolor}, the authors show a Knowledge Sharing algorithm for ring graphs
such that when a cheating agent does not pretend to be more than $n$ agents,
i.e., $m < n$, the algorithm is in equilibrium.
Thus, for these cases it holds that $e_m(n)=e_0(n)$,
since the agent duplicates successfully (according to the definition of $e_m(x)$)
but is still unable to affect the Knowledge Sharing algorithm outcome.
In cases where $m \geq n$, the deviating agent controls the output of the Knowledge Sharing algorithm,
and thus its expected utility is $1$, i.e., $e_m(n)=1$.
More formally:
$$ e_m(x)=
\begin{cases}
	0 & m > L-x \\
	1 & x \leq m \leq L-x \\
	\frac{1}{k} & m < x
\end{cases} $$

Notice that this necessarily implies that $m \geq t$.	
In addition, we have the probability of successfully duplicating $m$ times,
which is $p_m(x) = \frac{\binom{L-x}{m}}{\binom{L-1}{m}}$.
Following Equation~\ref{eq_basis} from our solution basis (Section~\ref{section_basis}),
the algorithm is in equilibrium when the following holds:
$$ \sum\limits_{x=t}^L \frac{1}{k} \geq \max_m \sum\limits_{x=t}^{L-m} p_m(x) e_m(x) $$

We can now expand this expression:
\begin{equation*}
\sum\limits_{x=t}^L \frac{1}{k} \geq \max_m \left(
	\sum\limits_{x=t}^m p_m(x) +
	\sum\limits_{x=m+1}^{L-m} \frac{1}{k} \cdot p_m(x)
\right)
\end{equation*}
\begin{equation}
\label{eq_ks}
\sum\limits_{x=t}^L \frac{1}{k} \geq \max_m \left(
	\sum\limits_{x=t}^m \frac{\binom{L-x}{m}}{\binom{L-1}{m}} +
	\sum\limits_{x=m+1}^{L-m} \frac{1}{k} \cdot \frac{\binom{L-x}{m}}{\binom{L-1}{m}}
\right)
\end{equation}

From this result it is clear that for a large enough $k$, the algorithm is not in equilibrium.
In addition, the right side is monotonously decreasing with $t$,
and clearly there is a minimal one for which the algorithm is in equilibrium.

\begin{theorem}
	Knowledge Sharing satisfies the requirements for Linear Threshold (Section~\ref{sub_enhance}).
\end{theorem}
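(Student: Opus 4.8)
The plan is to exploit the fact that, for Knowledge Sharing, the honest expected utility is flat: by the Full Knowledge property $e_0(x)=\tfrac1k$ for every $x$, so $f(L,t)=\frac{1}{L-t+1}\sum_{x=t}^{L}\tfrac1k=\tfrac1k$ is \emph{constant} in $t$. Consequently the Linear Threshold condition (Equation~\ref{eq_linear}) collapses to a single monotonicity statement: it suffices to show that $\max_m g(L,t,m)$ is non-increasing in $t$. Indeed, if the right-hand side only decreases as $t$ grows while the left-hand side stays fixed, then the set of thresholds in equilibrium, $\{t : f(L,t)\ge \max_m g(L,t,m)\}$, is upward closed, and taking $t_0$ to be its minimum yields exactly the equivalence demanded by Equation~\ref{eq_linear}.

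First I would reduce to a fixed $m$, since the pointwise maximum of non-increasing functions is again non-increasing. For fixed $m$, I rewrite $g(L,t,m)=\frac{1}{L-t+1}\sum_{x=t}^{L-m} v(x)$ with $v(x):=p_m(x)\,e_m(x)$. The key observation is that $v$ extends to the whole range $[t,L]$ by zero: since $e_m(x)=0$ for $x>L-m$, the upper terms vanish and $\sum_{x=t}^{L-m} v(x)=\sum_{x=t}^{L} v(x)$, so $g(L,t,m)$ is literally the average of $v$ over $\{t,\dots,L\}$. Next I would verify that $v(x)$ is non-increasing in $x$: the probability $p_m(x)=\binom{L-x}{m}/\binom{L-1}{m}$ is decreasing in $x$ (fewer free $id$s as the network grows), and $e_m(x)$ is itself a non-increasing step function, taking the value $1$ in the controlling regime $x\le m\le L-x$, then $\tfrac1k$ in the regime $m<x$, then $0$ once $m>L-x$; when $m>\tfrac{L}{2}$ the middle regime is empty and $e_m$ simply drops from $1$ to $0$. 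Both factors being non-negative and non-increasing, with only downward jumps, $v$ is non-increasing throughout, and I would confirm this by checking the two boundary transitions $x=m\to m+1$ and $x=L-m\to L-m+1$ explicitly.

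Given this, the averaging step finishes the argument: for a non-increasing non-negative sequence the leftmost term $v(t)$ satisfies $v(t)\ge v(x)$ for all $x\in[t,L]$, hence $v(t)$ is at least the average over $\{t,\dots,L\}$; removing an above-average element cannot increase the average, so $g(L,t+1,m)\le g(L,t,m)$. Passing to the maximum over $m$ preserves this, so $\max_m g(L,t,m)$ is non-increasing in $t$, as required.

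Finally I would confirm that a genuine pivot $t_0\le L$ exists, so the equivalence is non-vacuous. For $t>\tfrac{L}{2}$ no profitable duplication survives: for every $m$ either $m<t$, forcing $e_m(x)=\tfrac1k$ throughout the summation range and hence $g(L,t,m)\le\tfrac1k=f(L,t)$, or $L-m<t$, forcing $e_m(x)=0$ there and $g(L,t,m)=0$. Thus $\max_m g(L,t,m)\le f(L,t)$ for all large $t$, the equilibrium set is nonempty, and $t_0$ is well defined (a fixed small $L_0$ absorbs any degenerate small-$L$ cases). The one delicate point is the averaging step: one must resist conflating the normalizing factor $\frac{1}{L-t+1}$, which counts all of $[t,L]$, with the number of nonzero summands $L-m-t+1$. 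Recognizing that the vanishing tail makes $g(L,t,m)$ an honest average over $[t,L]$ is precisely what renders the monotonicity transparent, and I expect this to be the main obstacle to a clean write-up.
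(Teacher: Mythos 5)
Your proof is correct, and its skeleton matches the paper's: observe that $f(L,t)=\tfrac1k$ is constant in $t$, show that $\max_m g(L,t,m)$ is non-increasing in $t$, and conclude that the equilibrium set of thresholds is upward closed with a well-defined pivot $t_0$. Where you genuinely diverge is at the monotonicity step, and there your argument is the stronger one. The paper justifies $g(L,t+1,m)\le g(L,t,m)$ with the one-liner that ``increasing $t$ only removes items from the summation,'' which overlooks that the prefactor $\tfrac{1}{L-t+1}$ simultaneously grows; removing a nonnegative term from a sum does not by itself bound the new \emph{average}. Your observation that $v(x)=p_m(x)e_m(x)$ vanishes for $x>L-m$, so that $g(L,t,m)$ is an honest average of $v$ over all of $[t,L]$, combined with the fact that $v$ is a product of two nonnegative non-increasing factors (so the removed leftmost term is above the average), is exactly the missing justification, and you are right to flag the normalization as the delicate point. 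Your pivot-existence argument also differs: you show directly that for $t>\tfrac{L}{2}$ every duplication count $m$ is unprofitable (either $m<t$ forces $e_m=\tfrac1k$ on the whole range, or $L-m<t$ empties the range), whereas the paper anchors the argument at the single endpoint $t=L$, where $n$ is a-priori known and $g(L,L,m)=0$ by assumption, and then invokes a discrete intermediate-value step. Both anchors work; yours is self-contained and quantitative, the paper's is shorter but leans on the known-$n$ equilibrium assumption.
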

\begin{proof}
To show that Knowledge Sharing satisfies the Linear Threshold enhancement,
note that it holds that $\forall{L,t}: f(L,t)=\frac{1}{k}$.
Additionally, for $t=L$ it holds that $n=t=L$,
thus the number of agents in the network is a-priori known to all agents.
By our assumption that the algorithm is in equilibrium when $n$ is known: $\forall{L,m>0}: g(L,L,m)=0$.

It holds that $\forall{L,t,m'}: g(L,t+1,m') \leq g(L,t,m')$,
since increasing $t$ only removes items from the summation.
By definition it also holds that $\forall{L,t,m'}:~g(L,t,m') \leq \max_m g(L,t,m)$,
thus clearly $\max_m g(L,t+1,m) \leq \max_m g(L,t,m)$,
i.e., $g$ is decreasing as $t$ increases.

Fix $L$, we will separate into two cases:
\begin{enumerate}
	\item $f(L,3) \geq \max_m g(L,3,m)$: Set $t_0=3$.
	Since $g(L,t,m)$ is decreasing with $t$, and $f(L,t)$ is constant ($\frac{1}{k}$),
	then $t_0=3$ satisfies Equation~\ref{eq_linear}.
	
	\item $f(L,3) < \max_m g(L,3,m)$:
	By the intermediate value theorem there exists a minimal value $t_0 > 3$
	such that $f(L,t_0-1) < \max_m g(L,t_0-1,m)$ and $f(L,t_0) \geq \max_m g(L,t_0,m)$.
	Since $g(L,t,m)$ is decreasing with $t$, and $f(L,t)$ is constant ($\frac{1}{k}$),
	then this $t_0$ satisfies Equation~\ref{eq_linear}.
	
\end{enumerate}

In both cases,
the requirements for the Linear Threshold enhancements are satisfied.
\end{proof}

In contrast to the Leader Election problem,
it is not sufficient to check the case where $m=1$ since the right side might have more then one local maximum.
We could not find a different duplication value $m'$ that satisfies this requirement.
However, note that if $k$ is large enough (in regards to $L$)
then the second sum on the right is negligible and the right side does have one maximum.

So, for a general solution which includes any fixed $k$,
we can use binary search to find the minimal threshold $t$,
where for each value of $t$ the $\argmax$ of $m$ is found by "brute force",
leading to a running time of $O(L^2 \log L)$.

\section{Conclusions and Future Work}
\label{section_discussion}

Large networks with an unknown number of participating nodes are common
in realistic scenarios of distributed algorithms.
In this paper, we extended the bounded knowledge model of \cite{gtcolor}
to examine the effects of a limited $id$ space 
on the equilibrium of distributed algorithms.

When the exact size of the network is not known,
an agent can duplicate itself to affect the algorithm outcome,
and the amount of duplications is limited only by its risk of being caught,
determined according to the size of the $id$ space and the threshold $t$,
compared to its profit from duplication, determined according to the algorithm
and the number $m$ of duplicated agents it creates.

Our method can be used to find the \emph{minimal} threshold $t$
such that agents have no incentive to cheat.
That is, we can use the method to find the minimal value $t$ that satisfies $t \leq n \leq L$,
such that the algorithm is still in equilibrium.
Furthermore, we have demonstrated the usage of this method on two problems:
Leader Election and Knowledge Sharing.

Our results provide a number of interesting directions for future research:

\begin{enumerate}

\item One important assumption which was made is that $n\sim U[t,L]$. In real cases, this might not be true. For instance, $n$ can be assumed to be distributed normally or exponentially.
Therefore, the minimal threshold $t$ can be found for more distributions.

\item For simplicity reasons, we assumed the cheater has to decide in advance
on the number of duplications $m$ and on all the duplication $id$s.
However, the cheater can act "smarter" and choose $m$ and $ids$ sequentially, as the algorithm executes.
This setting may increase its profit and thus may increase the minimal threshold $t$ to larger values,
and should be examined separately. 

\item For Knowledge Sharing, an optimal value for $m$ (as a function of $t,L$) might exist,
thus decreasing the running time for finding the minimal threshold $t$.

\item Further fundamental distributed computing problems can be research under our model,
such as Consensus or Spanning Tree.

\item In \cite{gtcolor},
the agents are given lower and upper bounds $\alpha,\beta$ on $n$ such that $\alpha \leq n \leq \beta$.
Future research can explore the combination of both models
such that $\alpha=t$, $\beta \leq L$ (instead of $\beta=L$ as it is currently).

\end{enumerate}

\section{Acknowledgment}
We would like to thank Yehuda Afek for helpful discussions
and his course on Distributed Computing which has inspired this research,
and to Sivan Schick for his contributions to this paper.

\clearpage
\bibliographystyle{abbrv}

\end{document}